\documentclass[preprint,12pt]{elsarticle}

\usepackage{amsmath,amssymb,amsthm,amsfonts}
\usepackage{mathrsfs}
\usepackage{bm}
\usepackage{mathtools}

\usepackage{graphicx}
\usepackage{float}
\usepackage{booktabs}
\usepackage{subcaption}

\usepackage{xcolor}

\providecommand{\sep}{, }
\usepackage{hyperref}

\hypersetup{
    colorlinks=true,
    linkcolor=blue,
    citecolor=blue,
    urlcolor=blue
}
\journal{Physics Letters B}

\newtheorem{theorem}{Theorem}

\theoremstyle{definition}

\theoremstyle{remark}

\begin{document}

\begin{frontmatter}

\title{Metric--Measure Geometry and Geometric Analogues of Holographic Extremal Surfaces}

\author{Rohit Dhormare}


\ead{dhormaretheoreticalphysics@proton.me}

\address[inst1]{
Dr.\ Babasaheb Ambedkar Marathwada University,\\India
}

\date{April 27, 2026}

\begin{abstract}

We develop a geometric framework based on metric--measure spaces $(M,g,f)$, where the function
$f$ defines a deformation of the Riemannian measure motivated by Perelman's formulation of Ricci
flow. Within this setting, we introduce measure-weighted hypersurfaces and associated geometric
functionals, and derive modified extremality conditions for codimension-one and codimension-two
submanifolds.

These conditions provide intrinsic geometric analogues of extremal surface equations, arising solely
from the metric--measure structure and independent of holographic duality or quantum field theoretic
input. We further define a generalized functional combining a measure-weighted geometric term with
an effective bulk contribution and analyze its variational properties. The resulting Euler--Lagrange
equation exhibits a structural correspondence with semiclassical generalized entropy functionals,
while maintaining a purely geometric interpretation distinct from thermodynamic entropy in the
sense of Perelman's $W$-functional.

Applications to Schwarzschild and Anti-de Sitter geometries illustrate the emergence of preferred
geometric scales and the modification of ultraviolet scaling behavior induced by the function $f$.
These results suggest that metric--measure geometry provides a minimal framework in which key
structural features of extremal surface constructions can arise from intrinsic geometric principles.

\end{abstract}

\begin{keyword}
Metric--measure geometry \sep
Extremal surfaces \sep
Ricci flow \sep
Geometric analysis;
Holography \sep
Generalized entropy
\end{keyword}

\end{frontmatter}

\section{Introduction}

The holographic principle asserts that the degrees of
freedom of a gravitational system can be encoded on
lower-dimensional hypersurfaces, establishing a deep con-
nection between geometry, gravity, and quantum infor-
mation.

Its most concrete realization is provided by the
AdS/CFT correspondence~\cite{Maldacena1998}, in which a gravitational
theory in a $(d+1)$-dimensional bulk spacetime is dual
to a conformal field theory on its $d$-dimensional bound-
ary.

Within this framework, geometric quantities in the
bulk are related to entropic observables in the boundary
theory. In particular, the Ryu--Takayanagi prescription
~\cite{RyuTakayanagi2006} and its covariant extension~\cite{Hubeny2007} relate entanglement en-
tropy to extremal surface areas, while semiclassical gen-
eralizations lead to quantum extremal surfaces~\cite{EngelhardtWall2015}. The
geometric term in these constructions originates from the
Bekenstein--Hawking entropy~\cite{Bekenstein1973,Hawking1975}.

Recent developments have explored entropy-based con-
straints and geometric flow structures in Lorentzian set-
tings~\cite{Dhormare2026,Dhormare2025Flows}. These works motivate a broader question: to
what extent can extremal surface structures arise from
intrinsic geometric and analytic properties of spacetime,
independently of holographic duality or quantum field
theoretic input?

A natural framework for addressing this question is
provided by Ricci flow. Introduced by Hamilton~\cite{Hamilton1982},
Ricci flow evolves a Riemannian metric according to its
curvature.

Perelman's work~\cite{Perelman2002,Perelman2003} established a thermodynamic
structure for Ricci flows via the $F$- and $W$-functionals,
where entropy, energy, and related quantities are defined
in terms of the metric, curvature, and a normalization
function $f$ (see Sec.~5.1 of~\cite{Perelman2002}). The associated weighted
measure

\begin{equation}
d\mu = (4\pi\tau)^{-n/2} e^{-f} dV_g
\end{equation}

naturally leads to metric--measure geometry.

In the present work, we adopt a geometric perspective
in which the function $f$ defines a measure deformation,
and all functionals are interpreted as measure-weighted
geometric quantities.

We do not attempt to reconstruct the full thermody-
namic framework of Perelman's functionals, but instead
isolate the geometric consequences of the induced mea-
sure structure.

Metric--measure geometry provides a natural setting in
which curvature and measure interact. In particular, the
Bakry--\'Emery tensor~\cite{BakryEmery1985,WeiWylie2009} governs analytic and geo-
metric properties such as convexity, diffusion, and stabil-
ity.

Although Ricci flow and related constructions are for-
mulated for Riemannian metrics, applications to grav-
ity require extensions to pseudo-Riemannian settings.

In such contexts, geometric evolution typically leads to
off-diagonal metrics, nonholonomic structures, and gen-
eralized connections.

Relativistic generalizations of geometric flows and as-
sociated entropy functionals have been developed in ther-
mogeometric and metric-affine frameworks~\cite{VacaruVeliev2025a,VacaruVeliev2025b,Bubuianu2024}.

The present work does not incorporate these structures
and is restricted to a Riemannian setting with Levi-
Civita connection, providing a simplified sector in which
measure-induced geometric effects can be isolated.

We also emphasize that the flow parameter $\tau$ is treated
here as a fixed scale controlling the measure deformation.

While it admits a thermodynamic interpretation in
Perelman's framework and may be time-like in relativis-
tic generalizations, we consider a static metric--measure
background.

Motivated by these considerations, we investigate
whether metric--measure geometry alone can generate
structures formally analogous to those appearing in
holography.

In particular, we examine whether such a framework
admits:

\begin{itemize}
\item[(i)] preferred hypersurfaces defined intrinsically by the measure,
\item[(ii)] variational principles leading to extremal surface equations,
\item[(iii)] geometric functionals with structural similarity to generalized entropy.
\end{itemize}

We show that a metric--measure space $(M,g,f)$ induces
natural hypersurfaces defined by level sets of $f$, together
with measure-weighted functionals whose stationary
points satisfy

\begin{equation}
H = \frac{1}{2}\nabla_n f,
\label{eq:extremality_condition}
\end{equation}

where $H$ is the mean curvature and $\nabla_n f$ is the normal
derivative.

This condition expresses a balance between extrinsic
geometry and measure-induced drift.

We further introduce a generalized functional combin-
ing a measure-weighted geometric term with an effective
bulk contribution.

Although structurally similar to the semiclassical gen-
eralized entropy

\begin{equation}
S_{\mathrm{gen}} = \frac{A}{4G} + S_{\mathrm{bulk}},
\label{eq:generalized_entropy}
\end{equation}

our construction remains purely geometric and does not
invoke quantum entropy or holographic duality.

The aim of this work is therefore to identify a minimal
geometric mechanism---based on curvature and measure
deformation---capable of producing extremal surface
structures and modified scaling behavior.

The paper is organized as follows.

In Sec.~II we introduce measure-weighted geometry and
derive the associated geodesic equations.

Sec.~III develops hypersurface geometry, while Sec.~IV
formulates the measure-weighted area functional and its
extremality condition.

In Sec.~V we introduce a generalized functional and
analyze its variational structure.

Secs.~VI and VII present applications to Schwarzschild
and Anti-de Sitter geometries.

We conclude in Sec.~VIII with a discussion of implica-
tions and open problems.

\section{Measure-Weighted Geometry}

\subsection{Metric--measure structure}

Let $(M,g(t))$ be a smooth $n$-dimensional Riemannian
manifold evolving under the Ricci flow

\begin{equation}
\partial_t g_{ij} = -2R_{ij},
\label{eq:ricci_flow}
\end{equation}

as introduced in~\cite{Hamilton1982}. Associated with this flow, Perel-
man~\cite{Perelman2002,Perelman2003} defined a family of functionals based on a
weighted measure

\begin{equation}
d\mu = (4\pi\tau)^{-n/2} e^{-f} dV_g,
\qquad
\tau = T-t,
\label{eq:weighted_measure}
\end{equation}

where $f : M \times [0,T) \to \mathbb{R}$ satisfies the conjugate heat
equation

\begin{equation}
\partial_t f
=
-\Delta f
+
|\nabla f|^2
-
R
+
\frac{n}{2\tau}.
\label{eq:conjugate_heat}
\end{equation}

\paragraph{a. Geometric setting.}

Throughout this work, $(M,g)$
is taken to be a Riemannian manifold. Applications to
gravitational systems are implemented only at the level
of spatial slices of pseudo-Riemannian spacetimes. We do
not consider Ricci flow evolution of Lorentzian metrics in
this paper. A fully covariant extension would require a
Lorentzian metric--measure framework, in which the el-
liptic operator $\Delta_f$ is replaced by a hyperbolic operator
of d'Alembert type. Such generalizations typically in-
volve nonholonomic structures, off-diagonal metrics, and
generalized connections, as developed in relativistic geo-
metric flow theories~\cite{VacaruVeliev2025a,VacaruVeliev2025b,Bubuianu2024}.

\paragraph{b. Interpretation of $f$ and $\tau$.}

In Perelman's frame-
work, the function $f$ defines the measure $d\mu$, while the
$W$-functional encodes thermodynamic quantities, includ-
ing entropy and energy (see Section~5.1 of~\cite{Perelman2002}). The
parameter $\tau$ plays a role analogous to an inverse tem-
perature. In relativistic geometric flow generalizations, $\tau$
may instead be treated as a time-like evolution parame-
ter, leading to hyperbolic or mixed-type dynamics. In the
present work, we adopt a fixed-$\tau$ perspective and treat
$(M,g,f)$ as a static metric--measure background, where
$\tau$ serves as a scale parameter controlling the measure de-
formation.

This construction endows $(M,g)$ with the structure of
a metric--measure space $(M,g,f)$. The natural curvature
object in this setting is the Bakry--\'Emery Ricci tensor
~\cite{BakryEmery1985,WeiWylie2009}

\begin{equation}
\mathrm{Ric}_f := \mathrm{Ric} + \nabla^2 f,
\label{eq:bakry_emery}
\end{equation}

which governs analytic and geometric properties of the
weighted manifold.

This framework should be understood as a restricted
Riemannian sector of more general relativistic geometric
flow theories, where additional degrees of freedom arise
from nonholonomic constraints and metric-affine struc-
tures.

\subsection{Relation to Perelman's \texorpdfstring{$W$}{W}-functional}

For completeness, we recall Perelman's entropy func-
tional:

\begin{equation}
W[g,f,\tau]
=
\int_M
\left[
\tau \left( |\nabla f|^2 + R \right)
+
f
-
n
\right]
(4\pi\tau)^{-n/2} e^{-f} dV_g.
\label{eq:w_functional}
\end{equation}

This functional is monotonic along the Ricci flow and
admits a thermodynamic interpretation involving en-
tropy, energy, and related variables. In contrast, the
constructions developed in the present work do not di-
rectly rely on $W$, but instead use the induced measure
$e^{-f} dV_g$ to define geometric functionals.

Relations between such measure-based constructions
and $W$-entropy have been investigated in more general
relativistic and nonholonomic geometric flow frameworks
~\cite{VacaruVeliev2025a,VacaruVeliev2025b,Bubuianu2024}, where additional structure allows for a direct
thermodynamic interpretation. In the restricted Rie-
mannian setting considered here, a systematic correspon-
dence remains an open problem.

\subsection{Measure-weighted geodesics}

To probe the local geometry induced by the measure
$d\mu$, we introduce the weighted energy functional

\begin{equation}
E_f[\gamma]
=
\frac{1}{2}
\int_0^L
e^{-f(\gamma(s))}
|\dot{\gamma}(s)|^2 ds,
\label{eq:weighted_energy}
\end{equation}

where $\gamma : [0,L] \to M$ is a smooth curve.

This functional generalizes the standard geodesic en-
ergy by weighting contributions according to the density
$e^{-f}$. Similar structures arise in diffusion processes and
optimal transport on metric--measure spaces.

\subsection{Variational derivation}

Let $\gamma_\epsilon$ be a variation with fixed endpoints and varia-
tion vector field $V=\partial_\epsilon \gamma_\epsilon|_{\epsilon=0}$. A standard
variational computation yields

\begin{equation}
\delta E_f
=
\int_0^L
e^{-f}
\left\langle
\nabla_{\dot{\gamma}}\dot{\gamma}
-
\nabla f
+
\langle \nabla f,\dot{\gamma}\rangle \dot{\gamma},
V
\right\rangle ds.
\label{eq:variation_energy}
\end{equation}

Requiring $\delta E_f =0$ for arbitrary variations implies

\begin{equation}
\nabla_{\dot{\gamma}}\dot{\gamma}
=
\nabla f
-
\langle \nabla f,\dot{\gamma}\rangle \dot{\gamma}.
\label{eq:modified_geodesic}
\end{equation}

A detailed derivation is provided in Appendix~A.

\subsection{Geometric interpretation}

The modified geodesic equation can be interpreted as
the autoparallel equation of the affine connection

\begin{equation}
\nabla^{(f)}_X Y
=
\nabla_X Y
-
\frac{1}{2}
\left(
\langle \nabla f,X\rangle Y
+
\langle \nabla f,Y\rangle X
-
\langle X,Y\rangle \nabla f
\right),
\label{eq:affine_connection}
\end{equation}

which is torsion-free but not metric-compatible. Its cur-
vature structures are closely related to the Bakry--\'Emery
tensor $\mathrm{Ric}_f$.

\subsection{Relation to diffusion and optimal transport}

The weighted Laplacian

\begin{equation}
\Delta_f
=
\Delta
-
\nabla f \cdot \nabla
\label{eq:weighted_laplacian}
\end{equation}

generates a diffusion process compatible with the mea-
sure $d\mu$, while curvature bounds on $\mathrm{Ric}_f$ control
convexity properties of entropy functionals along Wasser-
stein geodesics~\cite{LottVillani2009,Sturm2006}.

\paragraph{a. Remark on relativistic generalization.}

In
Lorentzian settings, $\Delta_f$ is replaced by a hyperbolic
operator $\Box_f$, leading to relativistic transport equations
rather than diffusion. The development of a corre-
sponding measure-weighted variational and transport
theory requires nonholonomic or metric-affine structures
and remains an open direction beyond the scope of the
present work.

\section{Measure-Weighted Hypersurfaces}

\subsection{Definition and geometric structure}

Let $(M,g,f)$ be a smooth metric--measure space, where
$(M,g)$ is a Riemannian manifold and $f : M \to \mathbb{R}$ defines
the measure deformation. We define measure-weighted
hypersurfaces as level sets

\begin{equation}
\Sigma_c := \{x \in M \mid f(x)=c\}.
\label{eq:level_sets}
\end{equation}

Assuming $\nabla f \neq 0$ on $\Sigma_c$, each level set is a smooth
codimension-one hypersurface with unit normal

\begin{equation}
n = \frac{\nabla f}{|\nabla f|}.
\label{eq:unit_normal}
\end{equation}

The induced metric is

\begin{equation}
h_{ij} = g_{ij} - n_i n_j,
\label{eq:induced_metric}
\end{equation}

with corresponding area element $dA$. The family $\{\Sigma_c\}$
defines a foliation of $M$ determined by the measure de-
formation.

\paragraph{a. Geometric setting and scope.}

All constructions in
this work are formulated for Riemannian metric--measure
spaces. Applications to gravitational systems are under-
stood at the level of spatial or Euclideanized slices of
pseudo-Riemannian spacetimes. A fully covariant treat-
ment would require a Lorentzian metric--measure struc-
ture, where elliptic operators are replaced by hyper-
bolic d'Alembert-type operators and where generic off-
diagonal and nonholonomic structures arise under geo-
metric flow evolution. Such generalizations have been
developed in relativistic geometric flow frameworks~\cite{VacaruVeliev2025a,VacaruVeliev2025b,Bubuianu2024}, but are not considered in the present restricted set-
ting.

\paragraph{b. Relation to previous work.}

The present construc-
tion is complementary to earlier studies of coupled ge-
ometric flows and entropy-based criteria in Lorentzian
geometry~\cite{Dhormare2026,Dhormare2025Flows}, where dynamical and causal structures
play a central role. In contrast, we isolate here the con-
tribution of the measure structure in a fixed Riemannian
background.

\subsection{Measure-weighted boundary functional}

To each hypersurface $\Sigma_c$ we associate the functional

\begin{equation}
A_f[\Sigma]
:=
\int_\Sigma e^{-f} dA,
\label{eq:weighted_area}
\end{equation}

which defines a measure-weighted area induced by the
density $e^{-f}$.

This functional is purely geometric. While it shares
structural similarities with entropy functionals appearing
in gravitational contexts, it should not be identified with
thermodynamic entropy. In Perelman's formulation, the
$W$-functional encodes a full thermodynamic system in-
volving entropy, energy, and related variables (see Section
5.1 of~\cite{Perelman2002}). The present construction instead isolates the
measure contribution without incorporating curvature-
dependent thermodynamic terms.

Relations between measure-weighted functionals and
$W$-entropy have been investigated in more general geo-
metric flow settings, including relativistic and nonholo-
nomic frameworks~\cite{VacaruVeliev2025a,VacaruVeliev2025b,Bubuianu2024}, where additional geometric
degrees of freedom are present.

\subsection{First variation and stationarity}

We consider normal deformations of $\Sigma$ generated by
$V=\phi n$. A standard variational computation yields

\begin{equation}
\delta A_f
=
\int_\Sigma
e^{-f}
(H-\nabla_n f)\phi\, dA,
\label{eq:first_variation}
\end{equation}

where $H=\nabla \cdot n$ is the mean curvature.

Stationarity implies the weighted extremality condi-
tion

\begin{equation}
H = \nabla_n f.
\label{eq:weighted_extremality}
\end{equation}

To avoid redundancy, this result will be referred to as
the measure-weighted extremality condition throughout
the paper.

\subsection{Geometric interpretation}

Equation~\eqref{eq:weighted_extremality} generalizes the
minimal surface condition by incorporating a drift term
induced by the measure. The hypersurfaces $\Sigma_c$ can
therefore be interpreted as critical points of a weighted
geometric functional, determined intrinsically by
$(M,g,f)$.

This structure provides a geometric analogue of ex-
tremal surface conditions encountered in holographic
constructions, while remaining independent of any
boundary field theory or quantum entropy interpretation.

\subsection{Relation to relativistic and transport
frameworks}

In Riemannian settings, the measure $e^{-f} dV_g$ is nat-
urally associated with diffusion processes generated by
weighted Laplacians. In contrast, relativistic geometric
flow theories involve hyperbolic evolution governed by
d'Alembert-type operators and typically require nonholo-
nomic or metric-affine structures. In such frameworks,
geometric evolution leads to generic off-diagonal config-
urations and thermogeometric interpretations of $W$-type
functionals~\cite{VacaruVeliev2025a,Bubuianu2024}.

A key open problem is whether the present measure-
weighted variational principles admit a consistent ex-
tension to such relativistic transport settings, where
diffusion-type interpretations are replaced by wave-like
propagation.

\section{Measure-Weighted Extremal Surfaces}

\subsection{Measure-weighted area functional}

We now consider codimension-two submanifolds $\Sigma \subset
M$, which serve as analogues of extremal surfaces in ge-
ometric and holographic settings. Let $\Sigma$ be a smooth
embedded surface with induced metric and area element
$dA$.

We define the measure-weighted area functional

\begin{equation}
A_f[\Sigma]
=
\int_\Sigma e^{-f/2} dA.
\label{eq:measure_weighted_area}
\end{equation}

The choice of weight $e^{-f/2}$ follows from consistency
with the ambient metric--measure structure. More gen-
erally, for a weight $e^{-\alpha f}$, the first variation produces a
term proportional to $\alpha \nabla_n f$. For codimension-one hyper-
surfaces, the natural choice $\alpha =1$ is induced directly
from the measure $d\mu = e^{-f} dV_g$. For codimension-two
submanifolds, the value $\alpha =1/2$ ensures that the result-
ing extremality condition involves a balanced projection
of $\nabla f$ onto the normal bundle, yielding a covariant rela-
tion between the mean curvature vector and the measure
gradient.

\subsection{Variational principle}

Let $\Sigma$ undergo a normal deformation generated by a
vector field $V = \phi n$, where $n$ is a unit normal direc-
tion. For codimension-two surfaces, the normal bundle
is two-dimensional, and the fully covariant formulation is
expressed in terms of the mean curvature vector $\vec{H}$.

A standard variational computation yields

\begin{equation}
\delta A_f
=
\int_\Sigma
e^{-f/2}
\left(
H
-
\frac{1}{2}\nabla_n f
\right)
\phi\, dA.
\label{eq:variation_codim2}
\end{equation}

Stationarity implies the measure-weighted extremality
condition

\begin{equation}
H
=
\frac{1}{2}\nabla_n f,
\label{eq:measure_extremality_codim2}
\end{equation}

which will be referred to throughout as the measure-
weighted extremality condition.

In fully covariant form, this relation can be written as

\begin{equation}
\vec{H}
=
\frac{1}{2}(\nabla f)^\perp,
\label{eq:covariant_extremality}
\end{equation}

where $(\nabla f)^\perp$ denotes the projection of $\nabla f$ onto the nor-
mal bundle of $\Sigma$.

\subsection{Geometric interpretation}

Equation~\eqref{eq:measure_extremality_codim2} gener-
alizes the minimal surface condition $H=0$ by incorporat-
ing a drift term induced by the measure structure. The
functional~\eqref{eq:measure_weighted_area} therefore de-
fines a geometric variational principle in which extremal-
ity corresponds to a balance between extrinsic curvature
and the transverse gradient of the measure deformation.

We emphasize that, although the functional involves
a weighted measure, it should not be interpreted as a
thermodynamic entropy. In Perelman's formulation, the
$W$-functional encodes a full thermodynamic system in-
volving both entropy and energy variables. In contrast,
the present construction isolates the geometric contribu-
tion of the measure deformation without incorporating
curvature-dependent thermodynamic terms.

\subsection{Relation to holographic extremal surfaces}

The condition~\eqref{eq:measure_extremality_codim2}
exhibits a structural similarity to extremal surface equa-
tions appearing in holographic entropy prescriptions.
However, this correspondence is purely geometric. In
holographic frameworks, extremal surfaces arise from
a variational principle combining gravitational dynam-
ics with quantum entanglement entropy. Here, the func-
tion $f$ encodes a measure deformation rather than a
quantum entropy, and no dual field-theoretic interpre-
tation is assumed.

The present construction is complementary to earlier
work on entropy-based geometric criteria and coupled ge-
ometric flows in Lorentzian settings~\cite{Dhormare2026,Dhormare2025Flows}, where dynam-
ical and causal structures are essential. In contrast, we
focus on a static metric--measure background and isolate
the role of the measure in generating extremality condi-
tions.

\paragraph{a. Relation to geometric flow frameworks.}

In more
general relativistic and nonholonomic geometric flow the-
ories~\cite{VacaruVeliev2025a,VacaruVeliev2025b,Bubuianu2024}, extremal surface conditions may involve ad-
ditional contributions from generalized connections, tor-
sion, and off-diagonal metric components. In such set-
tings, geometric evolution leads to thermogeometric in-
terpretations of $W$-type functionals and to generalized
entropy constructions. The present framework captures
a minimal sector of these effects by restricting to scalar
measure deformations in a Riemannian setting.

\paragraph{b. Remark on relativistic generalization.}

In grav-
itational applications, codimension-two extremal sur-
faces are typically spacelike submanifolds embedded in
Lorentzian spacetimes. A fully covariant extension of
Eq.~\eqref{eq:measure_extremality_codim2} would require
replacing elliptic geometric operators with hyperbolic
counterparts and incorporating nonholonomic struc-
tures arising from relativistic geometric flow evolution.
Developing such an extension remains an open problem.

\section{Relation to Generalized Entropy Functionals}

\subsection{Geometric setting and relation to previous work}

In this section, $(M,g)$ is assumed to be a smooth Rie-
mannian manifold equipped with a measure deformation
field $f$. Applications to gravitational systems are un-
derstood via spatial embeddings in pseudo-Riemannian
spacetimes. A fully covariant Lorentzian formulation
would require replacing scalar mean curvature with the
mean curvature vector and incorporating causal struc-
ture, which is not considered here.

The present construction complements earlier investi-
gations of entropy and geometric flows in gravitational
settings. In particular, entropy-based non-collapse crite-
ria and Lorentzian geometric evolution were developed
in~\cite{Dhormare2026,Dhormare2025Flows}, while coupled Ricci--dilaton--flux systems were an-
alyzed in~\cite{Dhormare2025Flows}. The functional introduced below isolates
the contribution of the scalar measure deformation in a
purely Riemannian setting, providing a reduced sector of
these more general frameworks.

\subsection{Generalized functional and geometric analogue}

A central variational quantity in semiclassical gravity
is the generalized entropy functional

\begin{equation}
S_{\mathrm{gen}}
=
\frac{A}{4G}
+
S_{\mathrm{bulk}},
\label{eq:semiclassical_entropy}
\end{equation}

which combines a geometric area term with the von Neu-
mann entropy of quantum fields in a bulk region. Sta-
tionarity of $S_{\mathrm{gen}}$ defines quantum extremal surfaces.

Motivated by the measure-weighted geometry devel-
oped in previous sections, we introduce the geometric
functional

\begin{equation}
F[\Sigma]
=
\frac{1}{4G}
\int_\Sigma e^{-f/2} dA
+
F_{\mathrm{bulk}}[\Sigma],
\label{eq:geometric_functional}
\end{equation}

where $\Sigma$ is a smooth codimension-two hypersurface and
$F_{\mathrm{bulk}}[\Sigma]$ represents an effective bulk contribution associ-
ated with the region bounded by $\Sigma$.

This functional is not derived from a quantum field-
theoretic construction. Instead, it should be interpreted
as a geometric analogue of the semiclassical functional,
in which the measure deformation encoded by $f$ plays
a structurally analogous role without invoking quantum
degrees of freedom.

\subsection{Structure of the bulk contribution}

To ensure a well-defined variational problem, we as-
sume that $F_{\mathrm{bulk}}[\Sigma]$ admits an effective geometric descrip-
tion with the following properties:

\begin{itemize}

\item[(i)] \textbf{Differentiability:}
$F_{\mathrm{bulk}}$ is Fr\'echet differentiable un-
der smooth normal deformations of $\Sigma$.

\item[(ii)] \textbf{Local variational form:}
its first variation can
be expressed as

\begin{equation}
\delta F_{\mathrm{bulk}}
=
\int_\Sigma
E_{\mathrm{bulk}}
\phi\, dA,
\label{eq:bulk_variation}
\end{equation}

where $E_{\mathrm{bulk}}$ is an effective scalar density depending
on intrinsic and extrinsic geometric data.

\item[(iii)] \textbf{Geometric origin:}
$F_{\mathrm{bulk}}$ may depend on the region
$\Omega$ enclosed by $\Sigma$, for instance through an integral
of local invariants,

\begin{equation}
F_{\mathrm{bulk}}[\Sigma]
\sim
\int_\Omega
L(g,f,\nabla f,R)\, dV_g,
\label{eq:bulk_integral}
\end{equation}

where $L$ is an effective scalar density.

\end{itemize}

Such assumptions are standard in effective geometric
descriptions, where nonlocal contributions are approxi-
mated by local densities in a perturbative regime.

\subsection{Extremality condition}

\begin{theorem}
Let $(M,g,f)$ be a smooth Riemannian
metric--measure space, and let $\Sigma \subset M$ be a smooth,
compact, codimension-two hypersurface without bound-
ary. Consider the functional

\begin{equation}
F[\Sigma]
=
\frac{1}{4G}
\int_\Sigma
e^{-f/2} dA
+
F_{\mathrm{bulk}}[\Sigma].
\label{eq:functional_theorem}
\end{equation}

Then stationarity under smooth normal deformations is
equivalent to

\begin{equation}
\frac{1}{4G}
e^{-f/2}
\left(
H
-
\frac{1}{2}\nabla_n f
\right)
+
E_{\mathrm{bulk}}
=
0.
\label{eq:euler_lagrange}
\end{equation}

\end{theorem}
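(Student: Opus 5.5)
The plan is to compute the first variation of each of the two terms of $F$ separately under a normal deformation $V=\phi n$, add the results, and then use the fundamental lemma of the calculus of variations to turn the integral identity into the pointwise equation \eqref{eq:euler_lagrange}. For the bulk term there is nothing to compute: assumption (i) says $F_{\mathrm{bulk}}$ is differentiable along smooth normal deformations, and assumption (ii), Eq.~\eqref{eq:bulk_variation}, gives $\delta F_{\mathrm{bulk}}=\int_\Sigma E_{\mathrm{bulk}}\phi\,dA$. So the work lies in the weighted area term and in the final localization step.

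For the geometric term I would carry out the variation \eqref{eq:variation_codim2} directly rather than simply quote it. Fix a unit normal field $n$ in the normal bundle and let $\Sigma_\epsilon$ be the image of $\Sigma$ under the flow of an extension of $V=\phi n$. Two standard facts are needed.
\begin{itemize}
\item[(a)] The area element satisfies $\partial_\epsilon\, dA|_{\epsilon=0}=\mathrm{div}_\Sigma V\, dA$. For a normal field this equals $-\langle \vec H_{\mathrm{geo}},V\rangle\,dA$, with the sign fixed by the paper's convention $H=\nabla\cdot n$. Writing $H=\langle\vec H,n\rangle$, this term is $H\phi\,dA$.
\item[(b)] The weight satisfies $\partial_\epsilon e^{-f/2}=-\tfrac12 e^{-f/2}\nabla_V f=-\tfrac12 e^{-f/2}\phi\,\nabla_n f$.
\end{itemize}
By the product rule,
\[
\delta\int_\Sigma e^{-f/2}dA=\int_\Sigma e^{-f/2}\Bigl(H-\tfrac12\nabla_n f\Bigr)\phi\,dA.
\]
Since $\Sigma$ is compact without boundary, any tangential component of the variation contributes only a divergence term, which integrates to zero. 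This is why it suffices to consider normal deformations.

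Adding the two contributions gives
\[
\delta F=\int_\Sigma\Bigl[\tfrac1{4G}e^{-f/2}\bigl(H-\tfrac12\nabla_n f\bigr)+E_{\mathrm{bulk}}\Bigr]\phi\,dA .
\]
The implication ``$\Leftarrow$'' is then immediate: if the bracket vanishes, then $\delta F=0$ for every $\phi$. For ``$\Rightarrow$'', stationarity means the integral vanishes for all $\phi\in C^\infty(\Sigma)$. The bracket is continuous because $f$, $\Sigma$ are smooth and $E_{\mathrm{bulk}}$ is taken smooth, so the fundamental lemma forces it to vanish pointwise.

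The main obstacle is the codimension-two subtlety. A single normal direction $n$ gives only one scalar equation, while the normal bundle is two-dimensional. I would handle this by taking a local orthonormal normal frame $\{n_1,n_2\}$, writing $V=\phi_1n_1+\phi_2n_2$, and reading \eqref{eq:euler_lagrange} componentwise: $H_a=\langle\vec H,n_a\rangle$ and $E_{\mathrm{bulk}}=E_{\mathrm{bulk},a}$. This is the covariant statement analogous to \eqref{eq:covariant_extremality}. Independence of $\phi_1$ and $\phi_2$ then yields both components. I would also need to fix the sign convention of $H$ consistently with \eqref{eq:variation_codim2} and to assume that $E_{\mathrm{bulk}}$ is regular enough for the localization step.
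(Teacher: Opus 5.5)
Your proposal is correct and takes the same route as the paper's proof. The paper likewise combines the first variation of the weighted area term with the assumed local form of $\delta F_{\mathrm{bulk}}$ and then uses the arbitrariness of $\phi$; you supply what its two-line sketch leaves implicit, namely the product-rule computation of the weight and area-element variations, the fundamental lemma under a regularity assumption on $E_{\mathrm{bulk}}$, and the componentwise reading in a local orthonormal normal frame. That last addition is worthwhile, because in codimension two a scalar equation along a single fixed unit normal $n$ is not by itself equivalent to stationarity under all normal deformations, and a global unit normal field need not even exist.
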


\begin{proof}

The result follows from the first variation of the
measure-weighted area functional together with the as-
sumed local form of $\delta F_{\mathrm{bulk}}$. Imposing $\delta F =0$ for arbi-
trary normal variations yields the stated Euler--Lagrange
equation.

\end{proof}

In the regime where the bulk contribution is sublead-
ing, the extremality condition reduces to

\begin{equation}
H
=
\frac{1}{2}\nabla_n f,
\label{eq:subleading_limit}
\end{equation}

in agreement with the measure-weighted extremality con-
dition derived previously.

\subsection{Comparison with existing frameworks}

The variational structure obtained above exhibits a
structural analogy at the level of Euler--Lagrange equa-
tions with extremal surface conditions appearing in semi-
classical gravity. However, the origin of the modification
is purely geometric, arising from the measure deforma-
tion encoded by $f$.

It is important to distinguish three conceptually dif-
ferent constructions:

\begin{itemize}

\item[(i)]
Perelman's $W$-functional, which defines thermody-
namic variables for Ricci flow,

\item[(ii)]
the semiclassical generalized entropy $S_{\mathrm{gen}}$, involv-
ing quantum field-theoretic entropy,

\item[(iii)]
the functional $F[\Sigma]$, which is a geometric construc-
tion based on metric--measure structure.

\end{itemize}

\paragraph{a. Comparison with relativistic geometric flow the-
ories.}

In relativistic geometric flow frameworks~\cite{VacaruVeliev2025a,VacaruVeliev2025b,Bubuianu2024}, entropy functionals are constructed using non-
holonomic structures, generalized connections, and off-
diagonal metrics. In such settings, geometric evolution
typically leads to anisotropic configurations and ther-
mofield interpretations in which the flow parameter is
temperature-like.

The present formulation should be viewed as a re-
stricted sector of these theories, obtained by:

\begin{itemize}

\item restricting to Riemannian signature,

\item working with Levi-Civita connections,

\item neglecting nonholonomic and off-diagonal degrees
of freedom.

\end{itemize}

This reduction isolates the contribution of the scalar
measure deformation while retaining a well-defined and
tractable variational structure.

\subsection{Limitations and outlook}

A direct identification of $F_{\mathrm{bulk}}$ with quantum entan-
glement entropy, as well as a precise relation to Perel-
man's $W$-entropy, would require additional physical in-
put, including quantum field-theoretic considerations or
relativistic geometric flow extensions.

In particular, fully covariant formulations may involve:

\begin{itemize}

\item mean curvature vectors in Lorentzian geometry,

\item d'Alembert-type operators replacing elliptic Lapla-
cians,

\item nonholonomic splittings and generalized connec-
tions.

\end{itemize}

Developing such extensions remains an open direction.

\section{Example I: Schwarzschild Geometry}

\subsection{Geometric setting and reduction}

We consider the Schwarzschild spacetime with metric

\begin{equation}
ds^2
=
-
\left(
1-\frac{2M}{r}
\right)dt^2
+
\left(
1-\frac{2M}{r}
\right)^{-1}dr^2
+
r^2 d\Omega^2.
\label{eq:schwarzschild_metric}
\end{equation}

To remain consistent with the metric--measure frame-
work developed in previous sections, we restrict to a
constant-time hypersurface $t=\mathrm{const}$, which defines
a three-dimensional Riemannian manifold $(M,g)$ with
$r\in(2M,\infty)$.

All constructions below are therefore purely Rieman-
nian. We emphasize that no geometric flow evolution of
the Schwarzschild spacetime is assumed, and no claim
is made regarding solutions of Ricci flow equations in
Lorentzian signature.

\paragraph{a. Remark.}

In relativistic geometric flow theories,
including nonholonomic and thermogeometric frame-
works~\cite{VacaruVeliev2025a,VacaruVeliev2025b,Bubuianu2024}, the evolution of black hole geometries typ-
ically leads to off-diagonal metrics and generalized con-
nections. The present example does not incorporate such
structures and should be understood as a reduced geo-
metric model isolating the role of the measure deforma-
tion.

\subsection{Choice of normalization function}

Motivated by Gaussian weights appearing in diffusion
processes and in Perelman's formulation of Ricci flow, we
consider the ansatz

\begin{equation}
f(r)
=
\frac{r^2}{4\tau},
\label{eq:gaussian_ansatz}
\end{equation}

where $\tau>0$ is a fixed parameter.

In the context of Ricci flow, $\tau$ is interpreted as a
temperature-like parameter. In the present setting, how-
ever, no flow evolution is assumed, and $\tau$ should be
viewed as a fixed geometric scale controlling the local-
ization of the measure.

\paragraph{a. Clarification.}

The function $f$ is not obtained as
a solution of the Hamilton--Friedan equations. It is in-
troduced as a model ansatz to probe how a prescribed
measure deformation affects variational geometric struc-
tures.

\subsection{Measure-weighted functional}

For spherical hypersurfaces $\Sigma_r \cong S^2$, the measure-
weighted area functional introduced in Sec.~III reduces
to

\begin{equation}
A_f(r)
=
4\pi r^2 e^{-r^2/(4\tau)}.
\label{eq:schwarzschild_functional}
\end{equation}

This expression exhibits a competition between geo-
metric growth $(r^2)$ and exponential suppression induced
by the measure.

\subsection{Extremization}

Differentiating $A_f$ with respect to $r$, we obtain

\begin{equation}
\frac{dA_f}{dr}
=
4\pi e^{-r^2/(4\tau)}
\left(
2r
-
\frac{r^3}{2\tau}
\right).
\label{eq:derivative_functional}
\end{equation}

The extremality condition $\frac{dA_f}{dr}=0$ yields

\begin{equation}
r^2 = 4\tau,
\label{eq:preferred_radius_squared}
\end{equation}

so that the preferred radius is

\begin{equation}
r_* = 2\sqrt{\tau}.
\label{eq:preferred_radius}
\end{equation}

\subsection{Consistency with the variational condition}

This result is consistent with the general stationarity
condition derived earlier,

\begin{equation}
H = \nabla_n f,
\label{eq:stationarity_reference}
\end{equation}

which should be referenced rather than reintroduced.

For spherical hypersurfaces in the spatial
Schwarzschild geometry,

\begin{equation}
H=\frac{2}{r},
\qquad
\nabla_n f = \frac{r}{2\tau}.
\label{eq:mean_curvature_spherical}
\end{equation}

Equating these quantities reproduces

\begin{equation}
\frac{2}{r}
=
\frac{r}{2\tau}
\quad \Longrightarrow \quad
r^2 = 4\tau.
\label{eq:reproduced_condition}
\end{equation}

\subsection{Geometric interpretation}

The extremum occurs at a finite radius determined by
$\tau$, showing that the measure-weighted functional does
not grow monotonically despite the increase in geometric
area.

This behavior reflects a balance between:

\begin{itemize}

\item geometric expansion of hypersurfaces,

\item localization induced by the measure weight.

\end{itemize}

If $\tau$ is chosen to scale with the geometric parameter
$M^2$, then

\begin{equation}
r_* \sim 2M,
\label{eq:horizon_scaling}
\end{equation}

indicating that the preferred hypersurface lies near the
horizon scale.

\paragraph{a. Important qualification.}

This identification is
purely parametric. No thermodynamic or dynamical in-
terpretation (such as Bekenstein--Hawking entropy) is as-
sumed.

\subsection{Relation to relativistic and entropy-based
frameworks}

The existence of a preferred scale resembles structures
appearing in:

\begin{itemize}

\item geometric entropy functionals,

\item relativistic geometric flow models,

\item thermogeometric constructions involving $W$-
entropy.

\end{itemize}

However, the mechanism here is fundamentally differ-
ent. The localization arises solely from a prescribed mea-
sure deformation, rather than from:

\begin{itemize}

\item quantum entanglement entropy,

\item thermodynamic variables of Ricci flow,

\item or nonholonomic geometric evolution.

\end{itemize}

In more general relativistic settings~\cite{VacaruVeliev2025a,VacaruVeliev2025b,Bubuianu2024}, such ef-
fects may be coupled to:

\begin{itemize}

\item anisotropic geometric flows,

\item generalized connections,

\item or effective thermodynamic variables.

\end{itemize}

Understanding how the present measure-weighted ex-
tremization principle extends to those frameworks re-
mains an open problem.

\subsection{Conclusion of the example}

This example demonstrates that metric--measure ge-
ometry alone can select distinguished hypersurfaces in
non-compact geometries through a purely variational
mechanism.

The construction is entirely geometric and should not
be interpreted as:

\begin{itemize}

\item a derivation of horizon entropy,

\item a realization of holographic duality,

\item or a consequence of quantum field-theoretic effects.

\end{itemize}

\section{Example II: Anti-de Sitter Geometry}

\subsection{Geometric setting}

We consider $(d+1)$-dimensional Anti-de Sitter space-
time in Poincar\'e coordinates:

\begin{equation}
ds^2
=
\frac{1}{z^2}
\left(
dz^2 + dx^2 - dt^2
\right),
\label{eq:ads_metric}
\end{equation}

where $z>0$ and the conformal boundary is located at
$z\to0$.

To remain consistent with the metric--measure frame-
work developed in previous sections, we restrict to
a constant-time slice $t=\mathrm{const}$, which defines a $d$-
dimensional Riemannian manifold with metric

\begin{equation}
ds^2
=
\frac{1}{z^2}
\left(
dz^2 + dx^2
\right).
\label{eq:ads_spatial_metric}
\end{equation}

All constructions below are purely Riemannian. No
Lorentzian geometric flow or dynamical evolution is as-
sumed.

\paragraph{a. Remark.}

In relativistic geometric flow frame-
works, including nonholonomic and thermogeometric
constructions~\cite{VacaruVeliev2025a,VacaruVeliev2025b,Bubuianu2024}, Anti-de Sitter geometries may
evolve into off-diagonal configurations with generalized
connections. The present analysis does not incorporate
such structures and instead isolates the effect of a pre-
scribed measure deformation.

\subsection{Normalization function and scaling symmetry}

To preserve the scaling symmetry of AdS, we consider
the logarithmic ansatz

\begin{equation}
f(z)
=
\alpha \log z,
\label{eq:logarithmic_ansatz}
\end{equation}

with $\alpha\in\mathbb{R}$. The associated weight is

\begin{equation}
e^{-f/2}
=
z^{-\alpha/2}.
\label{eq:ads_weight}
\end{equation}

This choice is compatible with the scaling transforma-
tion

\begin{equation}
z \rightarrow \lambda z,
\qquad
x \rightarrow \lambda x,
\label{eq:scaling_symmetry}
\end{equation}

under which $f$ shifts linearly.

\paragraph{a. Interpretation.}

The function $f$ is not obtained
from a Ricci flow solution. It is introduced as a scale-
invariant ansatz probing how measure deformations mod-
ify geometric functionals.

\subsection{Measure-weighted functional}

Let $\Sigma$ be a codimension-two surface parametrized by
$z=z(x)$, extended along $(d-2)$ transverse directions.

The induced metric yields the area element

\begin{equation}
dA
=
\frac{\sqrt{1+(\partial z)^2}}{z^{d-1}}
\, d^{d-2}x.
\label{eq:area_element_ads}
\end{equation}

The measure-weighted functional introduced previ-
ously becomes

\begin{equation}
A_f[\Sigma]
=
\int_\Sigma
z^{-(d-1+\alpha/2)}
\sqrt{1+(\partial z)^2}
\, d^{d-2}x.
\label{eq:ads_functional}
\end{equation}

This defines a modified variational problem with an
effective scaling exponent shifted by the measure defor-
mation.

\subsection{Ultraviolet asymptotics}

Near the conformal boundary $z\to0$, we consider
smooth embeddings with bounded gradients, so that

\begin{equation}
\sqrt{1+(\partial z)^2}
=
1+\mathcal{O}\!\left((\partial z)^2\right).
\label{eq:bounded_gradients}
\end{equation}

At leading order, the functional reduces to

\begin{equation}
A_f
\sim
\int_\epsilon
z^{-(d-1+\alpha/2)}
\, dz,
\label{eq:uv_integral}
\end{equation}

where $\epsilon$ is a UV cutoff.

Evaluating the integral:

\begin{itemize}

\item For $d-1+\alpha/2 \neq 1$,

\begin{equation}
A_f
\sim
\epsilon^{-(d-2+\alpha/2)}.
\label{eq:power_divergence}
\end{equation}

\item For $d-1+\alpha/2 = 1$,

\begin{equation}
A_f
\sim
\log\frac{1}{\epsilon}.
\label{eq:log_divergence}
\end{equation}

\end{itemize}

Thus, the measure deformation modifies the divergence
structure of the functional.

\subsection{Effective scaling exponent}

The divergence can be expressed in terms of an effec-
tive exponent

\begin{equation}
d_{\mathrm{eff}}
=
d+\frac{\alpha}{2}.
\label{eq:effective_dimension}
\end{equation}

This quantity characterizes the scaling behavior of the
functional but does not correspond to a change in the
geometric dimension of the manifold.

\subsection{Extremality condition and geometric effect}

The extremality condition derived earlier should be ex-
pressed covariantly in terms of the mean curvature vector
$\vec{H}$. Projected along a normal direction, it takes the form

\begin{equation}
H
=
\frac{1}{2}\nabla_n f.
\label{eq:ads_extremality}
\end{equation}

For the present ansatz,

\begin{equation}
\nabla_n f
=
\partial_z f
=
\frac{\alpha}{z}.
\label{eq:gradient_ads}
\end{equation}

Near the boundary, the condition becomes

\begin{equation}
H
\sim
\frac{\alpha}{2z},
\label{eq:boundary_extremality}
\end{equation}

which introduces a scale-dependent correction to the
standard minimal surface condition.

\paragraph{a. Comparison.}

For $\alpha=0$, one recovers the usual
minimal surface equation $H=0$. For $\alpha\neq0$, the measure
induces a drift term that modifies the embedding.

\subsection{Relation to AdS extremal surface constructions}

The structure of the functional resembles that of
extremal surface problems in Anti-de Sitter geometry,
where area functionals determine preferred embeddings.

However, the origin of the modification is fundamen-
tally different:

\begin{itemize}

\item In holographic constructions, extremality is tied to
quantum entanglement entropy.

\item In the present framework, it arises from a pre-
scribed measure deformation.

\end{itemize}

No dual field-theoretic interpretation is assumed.

\subsection{Relation to geometric flow frameworks}

In relativistic geometric flow theories~\cite{VacaruVeliev2025a,VacaruVeliev2025b,Bubuianu2024}, measure
deformations may be dynamically generated and coupled
to:

\begin{itemize}

\item anisotropic geometric flows,

\item nonholonomic distributions,

\item generalized connections and effective sources.

\end{itemize}

Such frameworks can modify both the metric and the
measure simultaneously. The present construction iso-
lates only the measure contribution and should be viewed
as a reduced sector of more general thermogeometric
models.

\subsection{Conclusion of the example}

This example shows that metric--measure geometry
modifies the scaling behavior of geometric functionals in
Anti-de Sitter space without altering the underlying met-
ric.

The normalization function $f$ introduces an additional
geometric structure that controls ultraviolet behavior
and extremal embeddings.

The resulting effects are purely geometric and should
not be interpreted as:

\begin{itemize}

\item holographic entanglement entropy,

\item quantum extremal surfaces,

\item or thermodynamic quantities associated with Ricci
flow.

\end{itemize}

\section{Discussion}

In this work, we have developed a geometric frame-
work in which a metric--measure structure $(M,g,f)$
gives rise to variational principles for hypersurfaces and
codimension-two submanifolds. The central result is the
measure-weighted extremality condition (cf.~Sec.~IV),
which generalizes the minimal surface equation through
a coupling between extrinsic curvature and the gradient
of the measure deformation.

The principal contribution is to demonstrate that such
extremality conditions---and the associated variational
structure---arise intrinsically within metric--measure ge-
ometry, without requiring input from quantum field the-
ory or holographic duality. In this sense, the framework
isolates a minimal geometric mechanism capable of repro-
ducing the structural form of extremal surface equations.

\paragraph{Relation to Perelman functionals and geometric
flow thermodynamics.}

In the theory of Ricci flows
developed by Perelman, the $W$-functional defines a com-
plete thermodynamic structure, incorporating entropy,
energy, and related variables expressed in terms of the
metric, curvature, and measure. In the present work, we
do not attempt to reconstruct this full thermodynamic
system. Instead, we adopt a restricted geometric view-
point in which the function $f$ determines a deformation
of the measure $d\mu=e^{-f}dV_g$, and the resulting function-
als are interpreted as measure-weighted geometric quan-
tities.

Nevertheless, measure-dependent functionals of this
type naturally arise in broader geometric flow frame-
works, including relativistic and nonholonomic formula-
tions, where additional structures---such as generalized
connections and anisotropic configurations---allow for a
thermodynamic interpretation. From this perspective,
the present construction may be viewed as a reduced sec-
tor of more general thermogeometric theories, isolating
the contribution of scalar measure deformation.

\paragraph{Relation to previous work.}

The framework devel-
oped here is complementary to earlier investigations of
entropy-based criteria and coupled geometric flows in
Lorentzian geometry. In those settings, geometric evo-
lution and causal structure play a central role, and en-
tropy functionals are dynamically coupled to curvature.
In contrast, the present work focuses on a static metric--
measure background and isolates the variational conse-
quences of the measure deformation. Despite this differ-
ence in emphasis, the underlying structures are closely
related, suggesting that measure-weighted geometry may
provide a unifying perspective linking static and dynam-
ical approaches.

\paragraph{Geometric versus thermodynamic interpretation.}

A key conceptual distinction concerns the interpretation
of the functionals involved. Perelman's $W$-functional
defines a thermodynamic entropy along the Ricci flow,
while semiclassical generalized entropy involves the von
Neumann entropy of quantum fields. By contrast, the
functionals introduced in this work arise solely from the
measure deformation and should be understood as ge-
ometric quantities. The resulting correspondence with
entropy-based extremality conditions is therefore struc-
tural rather than physical, and no identification with
thermodynamic or quantum entropy is assumed.

\paragraph{Riemannian versus relativistic setting.}

All con-
structions in this paper are formulated on Rieman-
nian manifolds or on spatial (or Euclideanized) slices
of Lorentzian spacetimes. In this setting, the parame-
ter $\tau$---when present---plays the role of a scale parameter
analogous to temperature in Perelman's formulation. In
relativistic generalizations of geometric flows, however,
the evolution parameter may instead acquire a time-like
interpretation, leading to hyperbolic or mixed-type equa-
tions. Such extensions typically require nonholonomic
$3+1$ splittings, generalized connections, and off-diagonal
metric structures. The present framework does not incor-
porate these features and should therefore be understood
as a quasi-static sector of a more general relativistic the-
ory.

\paragraph{Summary of geometric effects.}

The results obtained
here demonstrate that metric--measure geometry alone
can generate several nontrivial features:

\begin{itemize}

\item the emergence of preferred hypersurfaces selected
by measure-weighted variational principles,

\item the existence of extremality conditions involving
both extrinsic curvature and measure gradients,

\item the modification of scaling behavior and localiza-
tion properties through the deformation function
$f$.

\end{itemize}

The examples considered illustrate complementary as-
pects of this mechanism. In the Schwarzschild case, the
measure-weighted functional exhibits localization at a fi-
nite radius determined by the parameter $\tau$, while in the
Anti-de Sitter case the measure deformation modifies the
ultraviolet scaling behavior of the functional. These re-
sults indicate that infrared localization and ultraviolet
scaling can be encoded within a unified geometric frame-
work.

\paragraph{Stability and variational structure.}

The analysis
in this work is restricted to first-order variations. The
stability of the resulting extremal surfaces depends on
the second variation of the functional, which is expected
to involve a modified Jacobi operator incorporating drift
terms arising from the measure deformation. A detailed
stability analysis would depend on curvature and higher
derivatives of $f$ and is left for future investigation.

\paragraph{Limitations and outlook.}

Several limitations remain.
The function $f$ is prescribed rather than dynamically de-
termined, and the bulk contribution introduced in Sec.~V
is treated at an effective level rather than derived from an
underlying quantum field theory. Moreover, the present
framework does not incorporate relativistic geometric
flow evolution, nonholonomic constraints, or generalized
connections.

Future work should aim to extend the present con-
struction by:

\begin{itemize}

\item embedding the variational framework into relativis-
tic geometric flow theories,

\item establishing a more direct relation to $W$-entropy
and thermodynamic variables,

\item deriving the bulk contribution from quantum or
semiclassical field-theoretic considerations,

\item and formulating a fully covariant Lorentzian ver-
sion of the extremality condition.

\end{itemize}

In summary, the results suggest that measure deforma-
tion provides an independent geometric mechanism capa-
ble of generating extremal surface structures and modi-
fying scaling behavior. This indicates that certain struc-
tural features commonly associated with holographic and
semiclassical entropy constructions may admit a deeper
origin rooted in metric--measure geometry itself, although
a complete physical interpretation remains an open prob-
lem.

\appendix

\section{Variational Structure of Measure-Weighted Functionals}

\subsection{Purpose and scope}

This appendix provides a unified derivation of Euler--
Lagrange equations associated with measure-weighted
functionals. The curve-based functional considered be-
low serves as a one-dimensional analogue of the hyper-
surface and codimension-two variational problems devel-
oped in the main text. It illustrates, in a simplified set-
ting, how measure deformation modifies standard geo-
metric extremality conditions and clarifies the normaliza-
tion structure underlying the higher-dimensional cases.

\subsection{Measure-weighted energy functional for curves}

Let $(M,g,f)$ be a smooth metric--measure space with
measure $d\mu=e^{-f}dV_g$. For a smooth curve $\gamma:[0,L]\to
M$, we consider the functional

\begin{equation}
E_f[\gamma]
=
\frac{1}{2}
\int_0^L
e^{-f(\gamma(s))}
|\dot{\gamma}(s)|^2 ds.
\label{eq:A1}
\end{equation}

This functional generalizes the classical energy by in-
corporating the measure weight and is consistent with
weighted variational principles arising in metric--measure
geometry and diffusion-type systems.

\subsection{First variation}

Let $\gamma_\epsilon(s)$ be a smooth variation with fixed endpoints
and variation vector field

\begin{equation}
V(s)
=
\left.
\frac{\partial \gamma_\epsilon(s)}{\partial \epsilon}
\right|_{\epsilon=0}.
\label{eq:A2}
\end{equation}

Denoting $\dot{\gamma}=\partial_s\gamma$ and $\nabla_s:=\nabla_{\dot{\gamma}}$, a standard computa-
tion using metric compatibility and integration by parts
yields

\begin{equation}
\delta E_f
=
\int_0^L
e^{-f}
\left\langle
V,
\nabla_s\dot{\gamma}
-
\frac{1}{2}
|\dot{\gamma}|^2\nabla f
+
\langle\nabla f,\dot{\gamma}\rangle\dot{\gamma}
\right\rangle
ds,
\label{eq:A3}
\end{equation}

where boundary terms vanish due to fixed endpoints.

\subsection{Euler--Lagrange equation}

Stationarity with respect to arbitrary variations $V$ im-
plies the Euler--Lagrange equation

\begin{equation}
\nabla_{\dot{\gamma}}\dot{\gamma}
=
\frac{1}{2}
|\dot{\gamma}|^2\nabla f
-
\langle\nabla f,\dot{\gamma}\rangle\dot{\gamma}.
\label{eq:A4}
\end{equation}

This equation generalizes the standard geodesic equa-
tion by introducing a drift term determined by the gra-
dient of the measure deformation.

\subsection{Arc-length parametrization}

If $\gamma$ is parametrized by arc length $(|\dot{\gamma}|^2=1)$,
equation~\eqref{eq:A4} simplifies to

\begin{equation}
\nabla_{\dot{\gamma}}\dot{\gamma}
=
\frac{1}{2}\nabla f
-
\langle\nabla f,\dot{\gamma}\rangle\dot{\gamma}.
\label{eq:A5}
\end{equation}

In this form, the acceleration is given by the projection
of $\frac{1}{2}\nabla f$ onto the normal bundle of the curve.

\subsection{Geometric interpretation}

The above equation shows that measure-weighted ex-
tremals deviate from standard geodesics by a force term
generated by $\nabla f$. More precisely, the right-hand side cor-
responds to the orthogonal projection of $\frac{1}{2}\nabla f$ onto the
normal space of the curve, while the tangential compo-
nent is removed by reparametrization invariance.

This structure can be interpreted as the autoparallel
equation of a torsion-free, non-metric connection

\begin{equation}
\nabla^{(f)}_X Y
=
\nabla_X Y
-
\frac{1}{2}
\left(
\langle\nabla f,X\rangle Y
+
\langle\nabla f,Y\rangle X
-
\langle X,Y\rangle\nabla f
\right),
\label{eq:A6}
\end{equation}

which is naturally associated with metric--measure geom-
etry.

\subsection{Normalization and relation to higher
codimension}

The coefficient $\frac{1}{2}$ appearing in equation~\eqref{eq:A4} reflects
the choice of weight $e^{-f}$ in the one-dimensional func-
tional. More generally, for a weight $e^{-\alpha f}$, the Euler--
Lagrange equation takes the form

\begin{equation}
\nabla_{\dot{\gamma}}\dot{\gamma}
=
\alpha
\left(
\frac{1}{2}
|\dot{\gamma}|^2\nabla f
-
\langle\nabla f,\dot{\gamma}\rangle\dot{\gamma}
\right).
\label{eq:A7}
\end{equation}

This normalization is consistent with the higher-
dimensional variational problems considered in the main
text. In particular:

\begin{itemize}

\item for codimension-one hypersurfaces, $\alpha=1$ leads to
$H=\nabla_n f$,

\item for codimension-two surfaces, $\alpha=\frac{1}{2}$ yields
$\vec{H}=\frac{1}{2}(\nabla f)^\perp$.

\end{itemize}

Thus, the curve case provides a unifying perspective on
the normalization of measure-weighted extremality con-
ditions.

\subsection{Relation to higher-dimensional variational
problems}

The structure derived here is directly analogous to
the hypersurface and codimension-two extremality con-
ditions obtained in the main text. In all cases, measure-
weighted variational principles produce a balance equa-
tion of the schematic form

\begin{equation}
(\text{geometric curvature})
\sim
(\text{projection of }\nabla f).
\label{eq:A8}
\end{equation}

For curves, this appears as a modification of the
geodesic equation; for hypersurfaces, it leads to a
weighted mean curvature condition; and for codimension-
two surfaces, it yields the extremality condition involving
the mean curvature vector.

\subsection{Remark on relativistic extensions}

The derivation presented here is formulated in a Rie-
mannian setting. In relativistic generalizations, the role
of the covariant derivative and Laplace-type operators
is replaced by hyperbolic counterparts (e.g.\ d'Alembert-
type operators), and additional structures such as non-
holonomic splittings and generalized connections may be
required. Extending measure-weighted variational prin-
ciples to such settings remains an open direction.

\subsection{Summary}

This appendix establishes that measure deformation
induces a universal modification of variational equations
across different geometric dimensions. The curve case
provides a transparent and technically controlled illustra-
tion of the mechanism underlying the extremality condi-
tions studied in the main text.

\end{document}